\newtheorem{theorem}{Theorem}
\newtheorem{remark}{Remark}
\theoremstyle{definition}
\newtheorem{definition}{Definition}
\newtheorem{problem}{Problem}
\title{\LARGE \bf
    Sampling-Aware Control Barrier Functions for Safety-Critical and Finite-Time Constrained Control}
\author{Shuo Liu$^{1}$, Wei Xiao$^{2}$ and Calin A. Belta$^{3}$
\thanks{This work was supported in part by the NSF under grant IIS-2024606 at Boston University.}
\thanks{$^{1}$S. Liu is with the department of Mechanical Engineering, Boston
University, Brookline, MA, USA. 
        {\tt\small liushuo@bu.edu}}%
\thanks{$^{2}$W. Xiao is with Department of  Robotics Engineering, Worcester Polytechnic Institute, MA, USA.
        {\tt\small wxiao3@wpi.edu}}%
\thanks{$^{3}$C. Belta is with the Department of Electrical and Computer Engineering and the Department of Computer Science, University of Maryland, College Park, MD, USA 
        {\tt\small calin@umd.edu}}%
}
\begin{document} 
\maketitle

\begin{abstract}
In safety-critical control systems, ensuring both safety and feasibility under sampled-data implementations is crucial for practical deployment. Existing Control Barrier Function (CBF) frameworks, such as High-Order CBFs (HOCBFs), effectively guarantee safety in continuous time but may become unsafe when executed under zero-order-hold (ZOH) controllers due to inter-sampling effects. Moreover, they do not explicitly handle finite-time reach-and-remain requirements or multiple simultaneous constraints, which often lead to conflicts between safety and reach-and-remain objectives, resulting in feasibility issues during control synthesis. This paper introduces Sampling-Aware Control Barrier Functions (SACBFs), a unified framework that accounts for sampling effects and high relative-degree constraints by estimating and incorporating Taylor-based upper bounds on barrier evolution between sampling instants. The proposed method guarantees continuous-time forward invariance of safety and finite-time reach-and-remain sets under ZOH control. To further improve feasibility, a relaxed variant (r-SACBF) introduces slack variables for handling multiple constraints realized through time-varying CBFs. Simulation studies on a unicycle robot demonstrate that SACBFs achieve safe and feasible performance in scenarios where traditional HOCBF methods fail.
\end{abstract}
\section{Introduction}
Safety is a fundamental consideration in the design and operation of autonomous systems. To ensure safety, extensive research has focused on incorporating safety constraints into optimal control formulations through the use of Barrier Functions (BFs) and Control Barrier Functions (CBFs). Originally introduced in the context of optimization \cite{boyd2004convex}, BFs are Lyapunov-like functions \cite{tee2009barrier} that have been widely employed to establish set invariance \cite{aubin2011viability, prajna2007framework} and to develop multi-objective and multi-agent control strategies \cite{wang2017safety, glotfelter2017nonsmooth}.

CBFs generalize BFs to guarantee the forward invariance of safe sets for affine control systems. When a CBF satisfies Lyapunov-like conditions, system safety is ensured \cite{ames2016control}. By integrating CBFs with Control Lyapunov Functions (CLFs), the CBF-CLF-QP framework formulates safe and optimal control problems as a sequence of Quadratic Programs (QPs) \cite{ames2012control, ames2016control}. Although early formulations were limited to constraints with relative degree one, subsequent extensions such as Exponential CBFs \cite{nguyen2016exponential} and High-Order CBFs (HOCBFs) \cite{xiao2021high} have enabled handling of higher-order constraints. The CBF-CLF-QP framework has been applied across various domains, including rehabilitation robotics \cite{isaly2020zeroing}, adaptive cruise control \cite{liu2023auxiliary, liu2025auxiliary}, humanoid locomotion \cite{khazoom2022humanoid}, and obstacle avoidance in complex environments \cite{liu2025learning, liu2024safety}.

Recent works also employed CBFs or CLFs to enforce time-constrained or task-based specifications on the system trajectories, like \cite{lindemann2018control,xiao2021high2,garg2019control,liu2024auxiliary}. In practice, physical systems evolve in continuous time, while controllers are typically implemented in discrete time, such as zero-order-hold (ZOH) controllers with a fixed sampling period. However, counterexamples can be readily constructed to show that control laws derived from the CBF conditions in \cite{hsu2015control, wang2017safety} may lose their safety guarantees when implemented in discrete time. Conversely, controllers designed using discrete-time CBFs may not guarantee continuous-time safety between sampling instants \cite{yang2020continuous}.

Recently, \cite{cortez2019control} proposed a method for ensuring satisfaction of the continuous-time CBF condition under a ZOH control law by bounding the time derivative of the CBF between sampling instants.
 \cite{yang2019self} developed a self-triggered control framework for safety-critical systems that adaptively determines the next sampling time based on the CBF evolution, thereby reducing conservatism compared to fixed-period sampling.
Subsequently, \cite{breeden2021control} presented a sampled-data formulation that guarantees forward invariance under ZOH implementation by explicitly accounting for inter-sampling effects.
\cite{taylor2022safety} proposed an approach that employs approximate discrete-time models to verify safety between sampling intervals.
More recently, \cite{niu2024sampling} incorporated both sampling and quantization effects into the CBF design to enhance the safety of cyber-physical systems implemented in digital environments.
Despite these advances, existing approaches still have notable limitations.
They primarily address constraints with relative degree one and thus cannot be directly extended to high-order constraints; they do not consider temporal constraints, which are usually realized through time-varying CBFs; and when multiple safety constraints are simultaneously imposed, the resulting optimization problems often suffer from infeasibility.

In this paper, we consider finite-time reach-and-remain requirements, which aim to guarantee that the system states reach a specified target set within a user-defined time horizon and subsequently remain within this set for a prescribed duration. Such requirements naturally arise when decomposing Signal Temporal Logic (STL) specifications into basic control objectives—primarily safety, reachability, and persistence constraints \cite{lindemann2018control,xiao2021high2}. To ensure safety and finite-time reach-and-remain for nonlinear systems with high relative-degree constraints under ZOH control, we propose Sampling-Aware Control Barrier Functions (SACBFs). Specifically, we develop a unified framework of SACBFs that explicitly account for sampling effects and high relative-degree constraints. By incorporating Taylor-based estimates of the upper bounds on the barrier function evolution between sampling instants, the proposed SACBFs ensure continuous-time forward invariance of the sets for safety and finite-time reach-and-remain under ZOH control. To enhance feasibility when multiple constraints are simultaneously enforced, a relaxation variable is introduced to each constraint without compromising forward invariance. Furthermore, we define two candidate SACBF formulations to handle aforementioned requirements: one designed for safety with time-invariant structure, and the other for finite-time reach-and-remain realized through time-varying CBFs. The effectiveness and flexibility of the proposed framework are demonstrated through case studies on a unicycle robot, showing that the relaxed SACBF achieves safe and feasible performance where traditional HOCBF methods fail.
\section{Preliminaries}
\label{sec:Preliminaries}
 Consider an affine control system expressed as 
 \begin{equation}
 \label{eq: affine-system}
\dot{\boldsymbol{x}}=f(\boldsymbol{x})+g(\boldsymbol{x})\boldsymbol{u}, 
 \end{equation}
 where $\boldsymbol{x}\in \mathbb{R}^{n},  f:\mathbb{R}^{n}\to\mathbb{R}^{n}$ and $g:\mathbb{R}^{n}\to\mathbb{R}^{n\times q}$ are locally Lipschitz. The control is $\boldsymbol{u}\in \mathcal U\subset \mathbb{R}^{q}$, where the control constraint set $\mathcal U$ is defined as 
  \begin{equation}
  \label{eq: input bounds}
  \mathcal U \coloneqq \{\boldsymbol{u}\in \mathbb{R}^{q}:\boldsymbol{u}_{min}\le \boldsymbol{u}\le \boldsymbol{u}_{max} \}, 
  \end{equation}
 with $\boldsymbol{u}_{min}, \boldsymbol{u}_{max}\in \mathbb{R}^{q}$ (vector inequalities are interpreted componentwise).\\
\begin{definition}[Lipschitz continuity~\cite{rockafellar1998variational}]
\label{def:Lipschitz continuity}
A function $h: \mathbb{R}^n \to \mathbb{R}^m$ is said to be Lipschitz continuous on a set $\mathcal{D} \subseteq \mathbb{R}^n$ if there exists a constant $L \geq 0$ such that
\begin{equation}
\| h(\mathbf{x}) - h(\mathbf{y}) \| \leq L \| \mathbf{x} - \mathbf{y} \|, \quad \forall \mathbf{x},\mathbf{y} \in \mathcal{D}.
\end{equation}
The smallest such $L$ is called the Lipschitz constant of $h$.
\end{definition}

\begin{definition}[Class $\kappa$ function~\cite{Khalil:1173048}]
 A continuous function $\alpha:[0,a)\to[0,+\infty],a>0$ is called a class $\kappa$ function if it is strictly increasing and $\alpha(0)=0$.
\end{definition}

\begin{definition} A set $\mathcal C\subset \mathbb{R}^{n}$ is forward invariant for system \eqref{eq: affine-system} if its solutions for some $\boldsymbol{u} \in \mathcal U$ starting from any $\boldsymbol{x}(0) \in \mathcal C$ satisfy $\boldsymbol{x}(t) \in \mathcal C, \forall t \ge 0.$
\end{definition}

\begin{definition} The relative degree of a differentiable function $b:\mathbb{R}^{n}\to\mathbb{R}$ is the number of times we need to differentiate it along system dynamics \eqref{eq: affine-system} until any component of $\boldsymbol{u}$ explicitly shows in the corresponding derivative. 
\end{definition}

In this paper, a \textbf{safety requirement} is defined as $b(\boldsymbol{x},t)\ge0$, and \textbf{safety} is the forward invariance of the set $\mathcal C\coloneqq \{\boldsymbol{x}\in\mathbb{R}^{n}:b(\boldsymbol{x},t)\ge 0\}$. The relative degree of time-varying function $b$ is also referred to as the relative degree of safety requirement $b(\boldsymbol{x},t) \ge 0$.  For $b(\boldsymbol{x},t)\ge 0$ with relative degree $m, \ b:\mathbb{R}^{n}\to\mathbb{R}$ and $\psi_{0}(\boldsymbol{x},t)\coloneqq b(\boldsymbol{x},t),$ we define a sequence of functions $\psi_{i}:\mathbb{R}^{n}\to\mathbb{R},\ i\in \{1,...,m\}$ as
 \begin{equation}
 \label{eq: HOCBFs}
 \psi_{i}(\boldsymbol{x},t)\coloneqq\dot{\psi}_{i-1}(\boldsymbol{x},t)+\alpha_{i}(\psi_{i-1}(\boldsymbol{x},t)),\ i\in \{1,...,m\}, 
 \end{equation}
where $\alpha_{i}(\cdot ),\ i\in \{1,...,m\}$ denotes a $(m-i)^{th}$ order differentiable class $\kappa$ function.  A sequence of sets $\mathcal C_{i}$ are then defined based on \eqref{eq: HOCBFs} as 
\begin{equation}
\label{eq: HOCBF sets}
\mathcal C_{i}\coloneqq \{\boldsymbol{x}\in\mathbb{R}^{n}:\psi_{i-1}(\boldsymbol{x},t)\ge 0\}, \ i\in \{1,...,m\}. 
\end{equation}
\begin{definition}[Time-varying High Order Control Barrier Function (HOCBF)~\cite{xiao2021high}]\label{def: HOCBF} Let $\psi_{i}(\boldsymbol{x},t),\ i\in \{1,...,m\}$ be defined by \eqref{eq: HOCBFs} and $\mathcal C_{i},\ i\in \{1,...,m\}$ be defined by \eqref{eq: HOCBF sets}. A function $b:\mathbb{R}^{n}\to\mathbb{R}$ is a time-varying High Order Control Barrier Function (HOCBF) with relative degree $m$ for system \eqref{eq: affine-system} if there exist $(m-i)^{th}$ order differentiable class $\kappa$ functions $\alpha_{i},\ i\in \{1,...,m\}$ such that
\begin{equation}
\label{eq: highest HOCBF}
 \begin{split}
 \sup_{\boldsymbol{u}\in \mathcal U}[L_{f}^{m}b(\boldsymbol{x},t)+L_{g}L_{f}^{m-1}b(\boldsymbol{x},t)\boldsymbol{u}+\frac{\partial ^{m}b(\boldsymbol{x},t)}{\partial t^{m}}+\\O(b(\boldsymbol{x},t))
 +  \alpha_{m}(\psi_{m-1}(\boldsymbol{x},t))]\ge 0,
 \end{split}
 \end{equation}
 $\forall \boldsymbol{x}\in \mathcal C_{1}\cap,...,\cap \mathcal C_{m}.$ 
\end{definition}
In \eqref{eq: highest HOCBF}, $L_{f}^{m}$ denote $m$-th order Lie derivatives along $f$; $L_{g}$ is the Lie derivative along $g$. We use $O(\cdot)=\sum_{i=1}^{m-1}L_{f}^{i}(\alpha_{m-1}\circ\psi_{m-i-1})(\boldsymbol{x},t) $ and assume that $L_{g}L_{f}^{m-1}b(\boldsymbol{x},t)\boldsymbol{u}\ne0$ on the boundary of set $\mathcal C_{1}\cap,...,\cap \mathcal C_{m}.$ 

\begin{theorem}[Safety Guarantee~\cite{xiao2021high}]
\label{thm:safety-guarantee}
Given a HOCBF $b(\boldsymbol{x},t)$ from Def. \ref{def: HOCBF} with corresponding sets $\mathcal{C}_{0}, \dots,\mathcal {C}_{m-1}$ defined by \eqref{eq: HOCBF sets}, if $\boldsymbol{x}(0) \in \mathcal {C}_{0}\cap \dots \cap \mathcal {C}_{m-1},$ then any Lipschitz controller $\boldsymbol{u}$ that satisfies the inequality in \eqref{eq: highest HOCBF}, $\forall t\ge 0$ renders $\mathcal {C}_{0}\cap \dots \cap \mathcal {C}_{m-1}$ forward invariant for system \eqref{eq: affine-system}, $i.e., \boldsymbol{x} \in \mathcal {C}_{0}\cap \dots \cap \mathcal {C}_{m-1}, \forall t\ge 0.$
\end{theorem}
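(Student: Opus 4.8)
The plan is to reduce the forward invariance of the full intersection $\mathcal{C}_{0}\cap\dots\cap\mathcal{C}_{m-1}$ to a backward induction over the functions $\psi_{i}$ defined in \eqref{eq: HOCBFs}, using at each stage a comparison/Nagumo argument driven by the class $\kappa$ property of $\alpha_{i}$. First I would observe that, after substituting the chosen Lipschitz controller $\boldsymbol{u}$, the left-hand side of \eqref{eq: highest HOCBF} is exactly the total time derivative $\dot{\psi}_{m-1}(\boldsymbol{x},t)+\alpha_{m}(\psi_{m-1}(\boldsymbol{x},t))=\psi_{m}(\boldsymbol{x},t)$ evaluated along the closed-loop trajectory, so the HOCBF inequality is equivalent to $\psi_{m}(\boldsymbol{x}(t),t)\ge 0$ for all $t\ge 0$. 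Lipschitzness of $\boldsymbol{u}$ together with the local Lipschitz assumptions on $f,g$ guarantees existence and uniqueness of the closed-loop solution, so each map $t\mapsto\psi_{i}(\boldsymbol{x}(t),t)$ is well-defined and, by the differentiability requirements on the $\alpha_{i}$, continuously differentiable.

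The core is the following comparison claim, which I would state and prove as the engine of the induction: if a scalar signal $\eta(\cdot)$ satisfies $\dot{\eta}(t)\ge -\alpha(\eta(t))$ for a class $\kappa$ function $\alpha$ and $\eta(0)\ge 0$, then $\eta(t)\ge 0$ for all $t\ge 0$. The cleanest route is a Nagumo-type boundary argument: if $\eta$ were to become negative, let $t^{*}$ be the first time it reaches $0$ from above; at $t^{*}$ we have $\eta(t^{*})=0$, hence $\dot{\eta}(t^{*})\ge -\alpha(0)=0$, contradicting the strict decrease required to cross into negativity. Applying this with $\eta(t)=\psi_{m-1}(\boldsymbol{x}(t),t)$ and $\alpha=\alpha_{m}$ — using $\psi_{m}\ge 0\Leftrightarrow\dot{\psi}_{m-1}\ge -\alpha_{m}(\psi_{m-1})$ from the recursion \eqref{eq: HOCBFs} and the hypothesis $\boldsymbol{x}(0)\in\mathcal{C}_{m-1}$, i.e. $\psi_{m-1}(\boldsymbol{x}(0),0)\ge 0$ — yields $\psi_{m-1}(\boldsymbol{x}(t),t)\ge 0$ for all $t$.

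I would then iterate downward: for each $i$ from $m-1$ to $1$, the recursion \eqref{eq: HOCBFs} gives $\psi_{i}=\dot{\psi}_{i-1}+\alpha_{i}(\psi_{i-1})$, so the already-established nonnegativity of $\psi_{i}$ along the trajectory is exactly $\dot{\psi}_{i-1}\ge -\alpha_{i}(\psi_{i-1})$; combined with $\psi_{i-1}(\boldsymbol{x}(0),0)\ge 0$ (the assumption $\boldsymbol{x}(0)\in\mathcal{C}_{i-1}$) and the comparison claim, this gives $\psi_{i-1}(\boldsymbol{x}(t),t)\ge 0$ for all $t$. Running the induction down to $i=1$ yields $\psi_{0}=b\ge 0$ and, along the way, $\psi_{0},\dots,\psi_{m-1}\ge 0$ for all $t\ge 0$, which is precisely $\boldsymbol{x}(t)\in\mathcal{C}_{0}\cap\dots\cap\mathcal{C}_{m-1}$, establishing forward invariance for system \eqref{eq: affine-system}.

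The main obstacle I anticipate is the rigor of the comparison/Nagumo step rather than the induction bookkeeping: one must ensure that the first-crossing time $t^{*}$ is well-defined and that $\dot{\eta}$ exists there, which hinges on the continuous differentiability of $t\mapsto\psi_{i}(\boldsymbol{x}(t),t)$ — and hence on the $(m-i)^{th}$ order differentiability imposed on $\alpha_{i}$ and on the absolute continuity of the closed-loop solution. A secondary subtlety is that $\alpha_{i}$ need not be locally Lipschitz, so the comparison ODE $\dot{y}=-\alpha_{i}(y)$ may fail to have unique solutions; the Nagumo sign argument sidesteps this entirely, since it uses only $\alpha_{i}(0)=0$ and the sign of $\dot{\eta}$ at the boundary, avoiding any appeal to uniqueness for the comparison system.
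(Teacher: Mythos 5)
Your overall architecture---observing that \eqref{eq: highest HOCBF} with the controller substituted is exactly $\psi_m(\boldsymbol{x},t)\ge 0$, i.e.\ $\dot{\psi}_{m-1}\ge -\alpha_m(\psi_{m-1})$, and then cascading a scalar comparison claim down the recursion \eqref{eq: HOCBFs} to obtain $\psi_{m-1}\ge 0,\dots,\psi_0=b\ge 0$---is exactly the structure of the proof in the cited reference \cite{xiao2021high}; note the paper itself does not reprove this theorem (it only cites it, and Theorem~\ref{thm:SACBF-safety-guarantee} later reuses the same cascade). The genuine gap is in your engine: the Nagumo-style first-crossing argument does not prove the comparison claim as stated. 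At the first zero $t^{*}$ you obtain $\dot{\eta}(t^{*})\ge -\alpha(0)=0$, but crossing into negativity does not require $\dot{\eta}(t^{*})<0$, so there is no contradiction: since $\eta<0$ just after $t^{*}$, the right difference quotients force $\dot{\eta}(t^{*})\le 0$, and the two facts merely combine to $\dot{\eta}(t^{*})=0$, which is perfectly consistent with $\eta$ becoming negative immediately afterward (e.g.\ $\eta(t)=-(t-t^{*})^{3}$).

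Moreover, your closing claim that the sign argument ``sidesteps'' the non-Lipschitz issue is backwards: non-Lipschitz $\alpha$ is precisely where the gap becomes fatal. Take $\alpha(r)=3r^{2/3}$ (a legitimate class $\kappa$ function, of the power form $\lambda r^{\eta}$ that this paper itself adopts later) and $\eta(t)=-(t-t^{*})^{3}$: for $t\le t^{*}$ one checks $\dot{\eta}(t)=-3(t^{*}-t)^{2}=-\alpha(\eta(t))$ exactly, so the differential inequality holds with equality everywhere that $\eta\ge 0$, with $\eta(0)>0$, and yet $\eta$ crosses into negativity at $t^{*}$. Hence the scalar claim, using only the inequality on the region where $\eta\ge 0$ (which is all that \eqref{eq: highest HOCBF} provides, since $\alpha_m$ is defined on $[0,a)$ and the condition is imposed on $\mathcal{C}_{1}\cap\dots\cap\mathcal{C}_{m}$), is false without an additional hypothesis. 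The standard repairs are: (i) assume each $\alpha_i$ is locally Lipschitz and invoke the comparison lemma of \cite{Khalil:1173048}, under which $y\equiv 0$ is an equilibrium of $\dot{y}=-\alpha_i(y)$ that solutions from $y_0\ge 0$ cannot reach or cross, giving $\eta(t)\ge y(t)\ge 0$; or (ii) work with an extended class $\kappa$ function defined for negative arguments and argue on the interval \emph{after} the crossing, where $\eta<0$ gives $\dot{\eta}\ge -\alpha(\eta)>0$, so $\eta$ is strictly increasing there and hence bounded below by $\lim_{s\downarrow t^{*}}\eta(s)=0$, a contradiction. Either repair slots into your (otherwise correct) downward induction without change.
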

The existing works~\cite{nguyen2016exponential, xiao2021high} leverage HOCBFs~\eqref{eq: highest HOCBF} to handle systems with high relative degree, combining them with quadratic cost functions to formulate safety-critical optimization-based controllers. In these frameworks, the continuous-time problem is discretized over sampling intervals $[t_k, t_{k+1})$, where the state $\boldsymbol{x}(t_k)$ is measured and a quadratic program (QP) with HOCBF constraints is solved to obtain the optimal control $\boldsymbol{u}^\ast(t_k)$, which is applied in a zero-order-hold (ZOH) manner while the system evolves according to~\eqref{eq: affine-system}. Because controller updates occur only at discrete sampling times, the system evolves open-loop between updates, potentially causing inter-sampling effects~\cite{singletary2020control}, where transient constraint violations arise even if the HOCBF conditions hold at the sampled instants. The severity of this effect depends on several factors, including the sampling period, the Lipschitz continuity of the system dynamics, and the steepness of the barrier functions. Specifically, when the sampling interval is not sufficiently small or the system exhibits fast dynamics, the state trajectory may temporarily exit the safe set between samples, leading to safety degradation despite satisfying the HOCBF constraints at discrete times.

\section{Problem Formulation and Approach}
\label{sec:Problem Formulation and Approach}

The objective is to develop a control strategy for the system in \eqref{eq: affine-system} that ensures continuous-time reach-and-remain of a desired target within prescribed time horizons, minimizes energy consumption, guarantees continuous-time safety, and satisfies the input constraints \eqref{eq: input bounds}.

\textbf{Finite-Time Reach-and-Remain Requirement:} 
Given time bounds $0 \le T_1 < T_2 \le \infty$, 
the state of system~\eqref{eq: affine-system} is required to reach a closed set 
$S \coloneqq \{\boldsymbol{x}\in\mathbb{R}^{n} \mid h(\boldsymbol{x}) \ge 0\}$,
defined by a continuously differentiable function $h:\mathbb{R}^{n}\to\mathbb{R}$,
within the horizon $T_1$.
That is, for a given initial condition $\boldsymbol{x}(0)\in\mathbb{R}^{n}$,
there must exist a time $t_r \in [0,\,T_1]$ such that $\boldsymbol{x}(t_r)\in S$.
Furthermore, after reaching the set, the state is required to remain in $S$
for all subsequent times over the interval $[t_r,\,T_2]$, i.e.,
$\boldsymbol{x}(t)\in S$ for all $t \in [t_r,\,T_2]$.

\textbf{Safety Requirement:} System \eqref{eq: affine-system} should always satisfy a safety requirement of the form: 
\begin{equation}
\label{eq:Safety constraint}
b(\boldsymbol{x}(t))\ge 0, \boldsymbol{x} \in \mathbb{R}^{n}, \forall t \in [0, T_2],
\end{equation}
where $b:\mathbb{R}^{n}\to\mathbb{R}$ is assumed to be a continuously differentiable function.

\textbf{Control Limitation Requirement:} The controller $\boldsymbol{u}$ should always satisfy \eqref{eq: input bounds} for all $t \in [0, T_2].$

A control policy is said to be \textbf{feasible} if all constraints arising from the aforementioned requirements are satisfied and remain mutually non-conflicting over the time interval $[0, T_2]$. 
In this paper, we consider the following problem:

\begin{problem}
\label{prob:SACC-prob}
Determine a feasible control policy for system~\eqref{eq: affine-system} that
minimizes the following cost:
\begin{equation}
\label{eq:cost-function-1}
\begin{split}
 J(\boldsymbol{u}(t))=\int_{0}^{T_2} 
 D(\left \| \boldsymbol{u}(t) \right \|)dt,
\end{split}
\end{equation}
while satisfying  all previously stated requirements.
\end{problem}

In Eq. (\ref{eq:cost-function-1}), $\left \| \cdot \right \|$ denotes the 2-norm of a vector and $D(\cdot)$ is a strictly increasing function of its argument. This problem setup is directly motivated by STL, as spatio-temporal STL specifications naturally reduce to finite-time reach-and-remain tasks. Therefore, solving Problem~\ref{prob:SACC-prob} directly enables satisfaction of this class of STL requirements.

As mentioned in Sec. \ref{sec:Preliminaries}, when Problem~\ref{prob:SACC-prob} is reduced to a sequence of QPs and solved at discrete time steps, the corresponding discretization introduces inter-sampling effects, where the control input is held constant in a ZOH manner between updates and the system evolves open-loop, potentially leading to transient requirement violations. Existing sampled-data CBF formulations that address inter-sampling issues cannot effectively handle time-varying or high-order CBFs, and when multiple CBF constraints are enforced simultaneously, the tightened inequalities may conflict, leading to infeasibility of the optimization problem. These limitations motivate the development of new formulations that can maintain continuous-time safety, reach-and-remain and feasibility for time-varying high-order CBFs under sampled implementations.

 \textbf{Approach:} In this paper, we introduce Sampling-Aware Control Barrier Functions (SACBFs) that explicitly address both the inter-sampling and high relative degree issues. By incorporating Taylor-based upper bounds on the barrier function’s evolution between sampling instants, SACBFs ensure continuous-time forward invariance of corresponding sets under ZOH control. Moreover, a relaxation variable is added to each SACBF constraint to improve feasibility when multiple SACBFs are jointly used to enforce the desired requirements.

 \section{Sampling-Aware Control Barrier Functions}
\label{sec:Sampling-Aware Control Barrier Functions}
In this section, we develop the framework of Sampling-Aware Control Barrier Functions (SACBFs), which systematically accounts for inter-sampling effects and high relative degree constraints to ensure continuous-time forward invariance of desired sets under ZOH control.
\subsection{Sampling-Aware Control Barrier Functions}
\label{subsec:Sampling-Aware Control Barrier Functions}
Let the sampling period be $\Delta t > 0$, and consider the affine control system \eqref{eq: affine-system}
evolving on a compact set $\mathcal{X}$, where the control input $\boldsymbol{u}\in\mathcal{U}$ is held constant
within each sampling interval $[t_k,\,t_{k+1})$ under a ZOH implementation, i.e., 
\begin{equation}
\boldsymbol{u}(t) = \boldsymbol{u}(t_k), \quad t \in [t_k, t_{k+1}).
\end{equation}
Consider functions $\psi_i(\boldsymbol{x},t)$ defined in \eqref{eq: HOCBFs}, where each $\alpha_i$ is defined as $\alpha_i(\psi_{i-1}(\boldsymbol{x},t)) = \lambda_i \psi_{i-1}(\boldsymbol{x},t)^{\eta_{i}}$ for some $\lambda_i, \eta_{i}> 0$. Integrating $\dot{\psi}_{m-1}(\boldsymbol{x},t) \ge -\lambda_m \psi_{m-1}(\boldsymbol{x},t)^{\eta_{m}}$ and using the comparison lemma in \cite{Khalil:1173048}, we have 
\begin{small}
\begin{equation}
\label{eq: CBF integral}
\begin{split}
\psi_{m-1}(\boldsymbol{x}(t),t)
\ge \psi_{m-1}(\boldsymbol{x}(t_0),t_0)
\, e^{-\lambda_m (t - t_0)}, \eta_{i}=1,\\
\psi_{m-1}(\boldsymbol{x}(t),t) \ge
\Bigl[\psi_{m-1}(\boldsymbol{x}(t_0),t_0)^{1-\eta_{m}}-\\
\lambda_{m}(1-\eta_{m})\,(t-t_0)\Bigr]_+^{\;\frac{1}{1-\eta_{m}}},  \eta_{i}\ne 1,
\end{split}
\end{equation}
\end{small}
where $[a]_+ := \max\{a,\,0\}.$ From the above expression, it can be seen that the right-hand side of the inequality 
is always nonnegative if $\psi_{m-1}(\boldsymbol{x}(t_0),t_0)\ge 0$. Therefore, Eq.~\eqref{eq: CBF integral} can be compactly represented as follows:
\begin{equation}
\label{eq: CBF integral2}
\psi_{m-1}(\boldsymbol{x}(t),t) \ge \mathcal{L}(\psi_{m-1}(\boldsymbol{x}(0),0),\lambda_{m},\eta_{m},t-t_0)\ge 0,
\end{equation}
where $\psi_{m-1}(\boldsymbol{x}(0),0)\ge0,\lambda_{m}>0,\eta_{m}>0$. For notational simplicity, we denote $\psi_{m-1}(t) \Longleftrightarrow  \psi_{m-1}(\boldsymbol{x}(t),t)$, $\mathcal{L}(t_0, t-t_0)\Longleftrightarrow \mathcal{L}(\psi_{m-1}(\boldsymbol{x}(0),0),\lambda_{m},\eta_{m},t-t_0)$ and the same convention is used hereafter. For the interval $[t_k, t_{k+1})$ starting from $t_k$, based on \eqref{eq: CBF integral2}, we have 
\begin{equation}
\label{eq: positive guarantee}
\psi_{m-1}(t_k+\bigtriangleup t) \ge \mathcal{L}(t_{k}, \bigtriangleup t)\ge0.
\end{equation}
Expanding $\psi_{m-1}(t_k + \Delta t)$ around $t_k$ up to the second order using a Taylor-based expansion yields 
\begin{equation}
\label{eq: second-order term}
\begin{split}
\psi_{m-1}(t_k+\bigtriangleup t)
= \psi_{m-1}(t_k)
+ \bigtriangleup t\dot{\psi}_{m-1}(t_k) 
+ \\
\frac{{(\bigtriangleup t)}^{2}}{2}\ddot{\psi}_{m-1}(\xi),
\end{split}
\end{equation}
where $\xi \in [t_k,\, t_k + \bigtriangleup t]$ is some intermediate time, and the last term in the above corresponds to the Lagrange remainder. To calculate $\ddot{\psi}_{m-1}(\xi)$, we apply the chain rule of differentiation along dynamic system \eqref{eq: affine-system} and obtain
\begin{small}
\begin{equation}
\label{eq: Taylor expansion}
\begin{split}
\ddot{\psi}_{m-1}(t)
=\dot{\boldsymbol{x}}^{\top}(\nabla_{\boldsymbol{x}}^2 \psi_{m-1}(t))\dot{\boldsymbol{x}}+ \nabla_{\boldsymbol{x}}& \psi_{m-1}(t)^{\top}\!
\Big(f_{\boldsymbol{x}}\dot{\boldsymbol{x}}+ \\
g_{\boldsymbol{x}}\dot{\boldsymbol{x}}\boldsymbol{u}
  + f_{t}
  + g_{t}\boldsymbol{u}
+\cancel{g(\boldsymbol{x},t)\dot{\boldsymbol{u}}}
\Big)+& 2\nabla_{\boldsymbol{x}t}^2 \psi_{m-1}(t)^{\top}\dot{\boldsymbol{x}} \\
&  + \nabla_{tt}^2 \psi_{m-1}(t),
\end{split}
\end{equation}
\end{small}
where $f_{\boldsymbol{x}}\coloneqq \tfrac{\partial f}{\partial \boldsymbol{x}}(\boldsymbol{x},t)$, $f_{t}\coloneqq \tfrac{\partial f}{\partial t}(\boldsymbol{x},t)$, $g_{\boldsymbol{x}}\coloneqq \tfrac{\partial g}{\partial \boldsymbol{x}}(\boldsymbol{x},t)$, $g_{t}\coloneqq \tfrac{\partial g}{\partial t}(\boldsymbol{x},t)$, $\nabla_{\boldsymbol{x}}\psi_{m-1}(t)
:= \frac{\partial}{\partial \boldsymbol{x}}\left(\psi_{m-1}(\boldsymbol{x},t)\right)$, $\nabla_{\boldsymbol{x}}^{2}\psi_{m-1}(t)
:= \frac{\partial}{\partial \boldsymbol{x}}\left(\nabla_{\boldsymbol{x}} \psi_{m-1}(t)\right)$, $\nabla_{\boldsymbol{x}t}^2 \psi_{m-1}(t)
:= \frac{\partial}{\partial t}\!\left(\nabla_{\boldsymbol{x}} \psi_{m-1}(t)\right)$, $\nabla_{tt}^2 \psi_{m-1}(t)
:= \frac{\partial^2 \psi_{m-1}(t)}{\partial t^2}$, $\dot{\boldsymbol{u}}=\textbf{0}$. In \eqref{eq: Taylor expansion}, the arguments $\boldsymbol{x}(t)$ and $\boldsymbol{u}(t)$ in $\ddot{\psi}_{m-1}(t)$ are omitted for notational simplicity. Because $f$, $g$, and $\psi_{m-1}$ are continuously differentiable and the system 
trajectory $\boldsymbol{x}(t)$ evolves within a compact set $\mathcal{X}$,
all terms in \eqref{eq: Taylor expansion} are continuous with respect to $t$.
Hence, $\ddot{\psi}_{m-1}(t)$ is continuous on the closed interval 
$[t_k,\,t_k+\Delta t]$. 
By the extreme value theorem in \cite{rudin1976principles}, there exists a finite constant $\bar{M}_k>0$ such that 
$|\ddot{\psi}_{m-1}(t)| \le \bar{M}_k,  \forall\,t\in[t_k,\,t_k+\Delta t]$.

\begin{definition}[Sampling-Aware Control Barrier Function (SACBF)]
\label{def:SACBF}
Let $\psi_{i}(\boldsymbol{x},t),\ i\in \{1,\ldots,m\}$ be defined by~\eqref{eq: HOCBFs},  $\mathcal{C}_{i},\ i\in \{1,\ldots,m\}$ be defined by~\eqref{eq: HOCBF sets} and $\mathcal{L}(t_k, \Delta t)$ be defined by \eqref{eq: CBF integral}, \eqref{eq: CBF integral2}. Consider the affine control system~\eqref{eq: affine-system}, which evolves within a compact set $\mathcal{X}\subset\mathbb{R}^n$, where the control input $\boldsymbol{u}(t)$ is held constant, i.e., $\boldsymbol{u}(t)=\boldsymbol{u}(t_k)$ for $t\in[t_k,t_k +\Delta t)$, $k\in\{0, 1, \ldots\}$ under a zero-order-hold (ZOH) implementation. A function $b:\mathbb{R}^{n}\to\mathbb{R}$ is called a Sampling-Aware Control Barrier Function (SACBF) with relative degree $m$ for system~\eqref{eq: affine-system} if there exist $(m\!-\!i)^{\text{th}}$-order differentiable class-$\kappa$ functions $\alpha_i(\psi_{i-1}(\boldsymbol{x},t)) = \lambda_i \psi_{i-1}(\boldsymbol{x},t)^{\eta_{i}}$, with $\lambda_i>0$ and $\eta_{i}>0$, $i\in \{1,\ldots,m\}$, such that
\begin{equation}
\label{eq: SACBF}
\begin{split}
\sup_{\boldsymbol{u}\in \mathcal U}[L_f \psi_{m-1}(t_k) + L_g \psi_{m-1}(t_k) \boldsymbol{u} + \frac{\partial \psi_{m-1}(t_k)}{\partial t}]
\ge \\
\frac{\mathcal{L}(t_k, \Delta t)-\psi_{m-1}(t_k)}{\Delta t}
+\tfrac{1}{2}\bar{M}_k \Delta t,
\end{split}
\end{equation}
$\forall \boldsymbol{x}\in \mathcal C_{1}\cap,...,\cap \mathcal C_{m}$. $\bar{M}_k>0$ denotes the upper bound of the second derivative term (i.e., $|\ddot{\psi}_{m-1}(t)|$ in \eqref{eq: second-order term}) over the sampling interval $[t_k, t_k+\Delta t]$.
\end{definition}
As $\Delta t \to 0$, the SACBF condition in \eqref{eq: SACBF} reduces to the HOCBF condition from \eqref{eq: HOCBFs}, since $\frac{\mathcal{L}(t_k,\Delta t)-\psi_{m-1}(t_k)}{\Delta t}\to -\alpha_m(\psi_{m-1}(t_k))$ and the correction term vanishes.
\begin{theorem}
\label{thm:SACBF-safety-guarantee}
Given a SACBF $b(\boldsymbol{x})$ as defined in Def.~\ref{def:SACBF}, with the associated sets $\mathcal{C}_{0},\dots,\mathcal{C}_{m-1}$ defined by~\eqref{eq: HOCBF sets}, 
suppose the initial condition satisfies $\boldsymbol{x}(t_0)\in\mathcal{C}_{0}\cap\dots\cap\mathcal{C}_{m-1}$. 
Then, under ZOH control with sampling period $\Delta t>0$, any locally Lipschitz controller $\boldsymbol{u}(t)$ that satisfies the SACBF condition~\eqref{eq: SACBF} for all $t\ge0$ renders the set 
$\mathcal{C}_{0}\cap\dots\cap\mathcal{C}_{m-1}$ forward invariant for the system~\eqref{eq: affine-system}, $i.e., \boldsymbol{x} \in \mathcal {C}_{0}\cap \dots \cap \mathcal {C}_{m-1}, \forall t\ge 0.$
\end{theorem}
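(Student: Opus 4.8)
The plan is to establish continuous-time nonnegativity of the top-level function $\psi_{m-1}$ on every sampling interval by an induction on $k$, and then to cascade this guarantee down through $\psi_{m-2},\dots,\psi_0=b$ exactly as in the proof of Theorem~\ref{thm:safety-guarantee}. The reason the cascade is the easy part is that, by construction, $\psi_i = \dot{\psi}_{i-1}+\lambda_i\psi_{i-1}^{\eta_i}$, so once $\psi_i(t)\ge 0$ is known to hold for all $t$, this rearranges to the differential inequality $\dot{\psi}_{i-1}\ge -\lambda_i\psi_{i-1}^{\eta_i}$ holding for all $t$, and the comparison-lemma bound~\eqref{eq: CBF integral} then gives $\psi_{i-1}(t)\ge 0$ whenever $\psi_{i-1}(0)\ge 0$ (supplied by the initial condition $\boldsymbol{x}(t_0)\in\mathcal{C}_0\cap\dots\cap\mathcal{C}_{m-1}$). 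Since the control enters only at the top level through $L_g\psi_{m-1}(t_k)\boldsymbol{u}$, the lower levels are unaffected by ZOH, so the entire difficulty concentrates in proving $\psi_{m-1}(t)\ge 0$ for all $t\ge 0$ under sampled implementation.

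First I would fix an interval $[t_k,t_k+\Delta t]$, adopt the inductive hypothesis $\psi_{m-1}(t_k)\ge 0$ (the base case $k=0$ being the initial condition), and note that the left-hand side of~\eqref{eq: SACBF} is precisely $\dot{\psi}_{m-1}(t_k)=L_f\psi_{m-1}(t_k)+L_g\psi_{m-1}(t_k)\boldsymbol{u}(t_k)+\tfrac{\partial\psi_{m-1}(t_k)}{\partial t}$ under the held control. I would then apply the second-order Taylor expansion~\eqref{eq: second-order term} not only at the endpoint but at a generic offset $s\in[0,\Delta t]$, producing a Lagrange remainder $\tfrac{s^2}{2}\ddot{\psi}_{m-1}(\xi)$ for some $\xi\in[t_k,t_k+s]$; the finiteness of $\bar{M}_k$ furnished by the extreme value theorem (preceding Def.~\ref{def:SACBF}) guarantees $\ddot{\psi}_{m-1}(\xi)\ge -\bar{M}_k$ on this interval.

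Next I would substitute the SACBF lower bound for $\dot{\psi}_{m-1}(t_k)$ together with $\ddot{\psi}_{m-1}(\xi)\ge -\bar{M}_k$ into the expansion and collect terms, obtaining
\[
\psi_{m-1}(t_k+s) \ge \left(1-\tfrac{s}{\Delta t}\right)\psi_{m-1}(t_k) + \tfrac{s}{\Delta t}\,\mathcal{L}(t_k,\Delta t) + \tfrac{1}{2}\bar{M}_k\, s(\Delta t - s).
\]
This rearrangement is the algebraic heart of the argument, and I expect it to be the main obstacle to get right: the two correction terms in~\eqref{eq: SACBF}, namely the finite-difference term $\tfrac{\mathcal{L}(t_k,\Delta t)-\psi_{m-1}(t_k)}{\Delta t}$ and the curvature term $\tfrac{1}{2}\bar{M}_k\Delta t$, are engineered so that the bound is a nonnegative combination over the \emph{entire} interval rather than merely at the endpoint. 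Indeed, the coefficients $1-\tfrac{s}{\Delta t}$ and $\tfrac{s}{\Delta t}$ are convex weights multiplying the nonnegative quantities $\psi_{m-1}(t_k)$ and $\mathcal{L}(t_k,\Delta t)$ (the latter nonnegative by~\eqref{eq: CBF integral2}), while $s(\Delta t-s)\ge 0$ for $s\in[0,\Delta t]$. Hence $\psi_{m-1}(t)\ge 0$ for every $t\in[t_k,t_{k+1}]$, and evaluating at $s=\Delta t$ yields $\psi_{m-1}(t_{k+1})\ge\mathcal{L}(t_k,\Delta t)\ge 0$, which closes the induction.

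Finally, having obtained $\psi_{m-1}(t)\ge 0$ for all $t\ge 0$, I would feed this into the cascade described above to conclude $\psi_i(t)\ge 0$ for every $i\in\{0,\dots,m-1\}$ and all $t\ge 0$, so that $\boldsymbol{x}(t)\in\mathcal{C}_0\cap\dots\cap\mathcal{C}_{m-1}$ throughout, which is the asserted forward invariance. The only remaining care points are checking that the comparison lemma is applicable on each lower level---which holds because each $\psi_i$ is continuously differentiable along the trajectory within the compact set $\mathcal{X}$ and the associated differential inequality holds for all $t$---and that the continuity/compactness hypotheses ensuring $\bar{M}_k<\infty$ are met, both of which are already in place.
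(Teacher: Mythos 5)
Your proof is correct, and while it shares the paper's overall skeleton (second-order Taylor expansion with the Lagrange remainder bounded below by $-\bar{M}_k$, substitution of the SACBF inequality, per-interval analysis, then the cascade down to $\psi_0=b$ via the comparison lemma / HOCBF argument), you finish the critical step by a genuinely different and cleaner route. The paper must show that the downward-opening quadratic $l(\tau)=\psi_{m-1}(t_k)+\tau(A+B)-\tfrac{1}{2}\bar{M}_k\tau^2$ is nonnegative on $[0,\Delta t]$, and it does so by contradiction: assuming $\Delta t$ exceeds the larger root of $l$, squaring inequality \eqref{eq: proof6}, and deriving $\bar{M}_k\,\mathcal{L}(t_k,\Delta t)\le 0$, which conflicts with nonnegativity. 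Your argument instead rewrites the very same quadratic identically as
\begin{equation*}
l(s)=\Bigl(1-\tfrac{s}{\Delta t}\Bigr)\psi_{m-1}(t_k)+\tfrac{s}{\Delta t}\,\mathcal{L}(t_k,\Delta t)+\tfrac{1}{2}\bar{M}_k\,s(\Delta t-s),
\end{equation*}
a convex combination of two nonnegative quantities (the first by the inductive hypothesis, the second by \eqref{eq: CBF integral2}) plus a term nonnegative for $s\in[0,\Delta t]$, so interval-wide nonnegativity is immediate. This buys three things: it is more elementary (no case analysis, no squaring of a signed inequality, and it sidesteps the strict-versus-nonstrict subtlety in the paper's contradiction when $\mathcal{L}(t_k,\Delta t)=0$); it makes transparent exactly why the two correction terms $\tfrac{\mathcal{L}(t_k,\Delta t)-\psi_{m-1}(t_k)}{\Delta t}$ and $\tfrac{1}{2}\bar{M}_k\Delta t$ in \eqref{eq: SACBF} are the right ones; and your explicit induction over sampling intervals—closing via $\psi_{m-1}(t_{k+1})\ge\mathcal{L}(t_k,\Delta t)\ge 0$—cleanly records the hypothesis ($\psi_{m-1}(t_k)\ge 0$, needed for $\mathcal{L}(t_k,\Delta t)\ge 0$ in the $\eta_m=1$ case) that the paper's ``shift and repeat'' phrasing leaves implicit. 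The paper's root-location argument reaches the same conclusion but with more bookkeeping and less insight into the structure of the constraint.
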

\begin{proof}
Starting from $[t_0,t_0+\Delta t]$, Taylor expansion in \eqref{eq: second-order term} together with
$\ddot{\psi}_{m-1}(\xi)\ge -\bar M_0$ yields
\begin{equation}
\label{eq: proof1}
\begin{split}
\psi_{m-1}(t_0+\Delta t)\ge \psi_{m-1}(t_0)+\Delta t\,\dot{\psi}_{m-1}(t_0)
-\frac{\Delta t^2}{2}\bar M_0.
\end{split}
\end{equation}
By \eqref{eq: SACBF} at $t_k=t_0$,
\begin{equation}
\label{eq: proof2}
\begin{split}
\dot{\psi}_{m-1}(t_0)\ge
\frac{\mathcal{L}(t_0,\Delta t)-\psi_{m-1}(t_0)}{\Delta t}
+\frac{1}{2}\bar M_0\Delta t.
\end{split}
\end{equation}
Substituting \eqref{eq: proof2} into the previous inequality \eqref{eq: proof1} gives
$\psi_{m-1}(t_0+\Delta t)\ge \mathcal{L}(t_0,\Delta t)\ge 0$, which indicates that satisfying condition \eqref{eq: SACBF} guarantees the nonnegativity of 
$\psi_{m-1}(t_0+\Delta t)$ at the end of the sampling interval. Let $\tau \in [0,\,\Delta t]$, we want
\begin{equation}
\label{eq: proof3}
\begin{split}
\psi_{m-1}(t_0+\tau)\ge \psi_{m-1}(t_0)
+ \tau \dot{\psi}_{m-1}(t_0)
- 
\frac{{\tau}^{2}}{2}\bar{M}_0\ge 0.
\end{split}
\end{equation}

Substituting \eqref{eq: proof2} into \eqref{eq: proof3}, we want 
\begin{equation}
\label{eq: proof4}
\begin{split}
\psi_{m-1}&(t_0+\tau) \ge l(\tau)=
\psi_{m-1}(t_0)
+ \\ &\tau \Big(\frac{\mathcal{L}(t_{0}, \bigtriangleup t)-\psi_{m-1}(t_0)}{\Delta t}
+ \tfrac{1}{2}\bar{M}_0 \Delta t\Big)
- \tfrac{1}{2}\bar{M}_0 {\tau}^{2}
 \ge 0.
\end{split}
\end{equation}
Define $A=\frac{\mathcal{L}(t_{0}, \bigtriangleup t)-\psi_{m-1}(t_0)}{\Delta t}, B=\tfrac{1}{2}\bar{M}_0 \Delta t$. The function $l(\tau)$ is a quadratic function of $\tau$ that opens downward. We assume that $\Delta t$ is larger than its largest root, as given below:
\begin{equation}
\label{eq: proof5}
\begin{split}
\Delta t > \frac{-(A+B)-\sqrt{(A+B)^{2}+2\bar{M}_{0}\psi_{m-1}(t_0)}}{-\bar{M}_0}.
\end{split}
\end{equation}
By rewriting \eqref{eq: proof5}, we have
\begin{small}
\begin{equation}
\label{eq: proof6}
\begin{split}
-\bar{M}_0\Delta t +A +B = A-B < -\sqrt{(A+B)^{2}+2\bar{M}_{0}\psi_{m-1}(t_0)},
\end{split}
\end{equation}
\end{small}
where $\psi_{m-1}(t_0)\ge 0$. Following the above equation, we have that $A-B \leq 0$, 
then we square both sides of inequality~\eqref{eq: proof6} to obtain $(A - B)^2 \ge (A + B)^{2} + 2\bar{M}_{0}\psi_{m-1}(t_0)$. Rearranging terms gives $-2AB \ge \bar{M}_{0}\psi_{m-1}(t_0)$. Substituting $A$ and $B$ into this inequality, we have $\bar{M}_{0}\mathcal{L}(t_{0}, \Delta t) \le 0$. However, since $\bar{M}_{0}\mathcal{L}(t_{0}, \Delta t) \ge 0$ clearly holds, the assumption \eqref{eq: proof5} is invalid, and therefore $\Delta t$ must be less than or equal to the larger root of the quadratic function $l(\tau)$. Consequently, since the quadratic function $l(\tau)$ remains nonnegative for all $\tau \in [0, \Delta t]$ as long as $\Delta t$ does not exceed its larger root, ensuring that $\Delta t$ is less than or equal to the larger root of $l(\tau)$ guarantees that $l(\tau)\ge 0$ holds throughout the interval. Therefore, satisfying inequality~\eqref{eq: SACBF} ensures that $\psi_{m-1}(t_0+\tau)$ remains nonnegative for all $\tau \in [0, \Delta t]$.

We now have proved that $\psi_{m-1}(t) \ge 0, \forall t \in[t_0, t_0+\Delta t]$. By shifting the time interval to $[t_1, t_1+\Delta t]$ and repeating the same proof, we obtain $\psi_{m-1}(t) \ge 0, \forall t \in[t_1, t_1+\Delta t]$. By iterating this process, it follows that $\psi_{m-1}(t) \ge 0$ for all $t\ge 0$. With this result and by following the safety guarantee proof of HOCBFs in~\cite{xiao2021high}, we can further derive $\psi_{m-2}(t) \ge 0, \dots, \psi_{0}(t)=b(\boldsymbol{x},t) \ge 0$. Therefore, the intersection of the sets $\mathcal{C}_{0}, \dots, \mathcal{C}_{m-1}$ is forward invariant.
\end{proof}
\begin{remark}
\label{rem:choice of classk}
To simplify the computation, in Eq. \eqref{eq: CBF integral} and Def. \ref{def:SACBF}, we define the class $\kappa$ functions in the form $\alpha_i(\psi_{i-1}(\boldsymbol{x},t)) = \lambda_i \psi_{i-1}(\boldsymbol{x},t)^{\eta_i}$. However, the definition of a class~$\kappa$ function can be made more general, for example, by defining it as a polynomial function of $\psi_{i-1}(\boldsymbol{x},t)$. As long as all coefficients and exponents of the polynomial are positive, the Def. \ref{def:SACBF} and Thm. \ref{thm:SACBF-safety-guarantee} of SACBFs remain valid.
\end{remark}
\subsection{Relaxation of SACBF Constraint for Feasibility}
\label{subsec:relaxed SACBF}
Since Eq. \eqref{eq: SACBF} serves as a constraint for solving the optimal input, it may conflict with the input bounds \eqref{eq: input bounds} or other SACBF constraints, resulting in infeasibility. Hence, a relaxation variable is added to improve feasibility without compromising the forward invariance guarantee of the desired set. In Eq. \eqref{eq: positive guarantee}, we add a slack variable $\omega$ and obtain
\begin{equation}
\label{eq: positive guarantee2}
\psi_{m-1}(t_k+\bigtriangleup t) \ge \omega_k\mathcal{L}(t_{k}, \bigtriangleup t)\ge0,
\end{equation}
where $\omega_k \ge 0.$ Due to the modification in Eq.~\eqref{eq: positive guarantee2}, the corresponding term $\mathcal{L}(t_{k}, \bigtriangleup t)$ in Eq.~\eqref{eq: SACBF} is replaced by $\omega_k\mathcal{L}(t_{k}, \bigtriangleup t)$. The slack variable $\omega$ is typically assigned a reference value of $1$ by adding a weight scalar $q>0$ to the quadratic penalty $(\omega-1)^{2}$ in the cost function~\eqref{eq:cost-function-1}, thereby ensuring that Eqs.~\eqref{eq: positive guarantee2} and~\eqref{eq: SACBF} remain unchanged under nominal conditions. When no input can satisfy the SACBF constraint~\eqref{eq: SACBF}, $\omega$ is allowed to decrease within the range $[0,1]$ to relax the constraint and enhance feasibility. Moreover, since $\omega_k \mathcal{L}(t_{k}, \bigtriangleup t) \ge 0$, introducing the slack variable does not affect the non-negativity of $\psi_{m-1}(t_k+\Delta t)$. Therefore, this relaxation enhances feasibility without compromising the forward invariance guarantee.
\subsection{Estimation of the Taylor-Based Upper Bound}
\label{subsec:Taylor-Based Upper Bound}
For notational simplicity, we define $F := f + g\,\boldsymbol{u}$, which represents the affine vector field under a fixed control input $\boldsymbol{u}$. Then, the instantaneous bound of $|\ddot{\psi}_{m-1}(t)|$ in \eqref{eq: second-order term} can be written as
\begin{small}
\begin{equation}
\label{eq: bound1}
\begin{split}
\Phi :=
\|\nabla_{\boldsymbol{x}}^2\psi_{m-1}\|\|F\|^2
+ \|\nabla_{\boldsymbol{x}}\psi_{m-1}\|
  \big(\|f_{\boldsymbol{x}}\|+\|g_{\boldsymbol{x}}\|\|\boldsymbol{u}\|\big) \|F\|\\
+ 2\|\nabla_{\boldsymbol{x}t}^2\psi_{m-1}\|\|F\|
+ \|\nabla_{tt}^2\psi_{m-1}\|
\\+ \|\nabla_{\boldsymbol{x}}\psi_{m-1}\|\big(\|f_t\|+\|g_t\|\|\boldsymbol{u}\|\big).
\end{split}
\end{equation}
\end{small}
This follows from taking the norm of each term in $\ddot{\psi}_{m-1}(t)$ and applying the triangle inequality. To avoid taking a conservative supremum over the entire state space $\mathcal{X}$,
we restrict the evaluation to a local trajectory tube 
\begin{equation}
\label{eq: tube}
\begin{split}
\mathcal{R}_k(\Delta t):=\big\{\boldsymbol{x}:\ \|\boldsymbol{x}-\boldsymbol{x}(t_k)\|\le
\rho(\Delta t)\big\},\\
\rho(\Delta t):=\frac{e^{L_F \Delta t}-1}{L_F}\,\|F(\boldsymbol{x}(t_k),\boldsymbol{u})\|,
\end{split}
\end{equation}
where $L_F$ is the local Lipschitz constant of $F(\cdot)$ with respect to $\boldsymbol{x}$ 
uniformly over $t\in[t_k,t_k+\Delta t]$.
If $L_F=0$, then $\rho(\Delta t)=\Delta t\|F(\boldsymbol{x}(t_k),\boldsymbol{u})\|$.  This local trajectory tube bound follows from the $L_F$ and the integral form of 
Grönwall’s inequality \cite{{Khalil:1173048}}.
The local Taylor-based upper bound is therefore defined as
\begin{equation}
\label{eq: bound2}
\begin{split}
\bar M_{k}(\Delta t):= 
\sup_{\substack{t\in[t_k,t_k+\Delta t]\\ (\boldsymbol{x},\boldsymbol{u})\in\mathcal{R}_k(\Delta t)\times\mathcal{U}}} 
\Phi(\boldsymbol{x},\boldsymbol{u},t), \\
|\ddot{\psi}_{m-1}(t)|\ \le\ \bar M_{k}(\Delta t),\quad
\forall\,t\in[t_k,t_k+\Delta t].
\end{split}
\end{equation}

\textbf{Computational Approximation:}
When evaluating $\bar M_{k}(\Delta t)$ exactly is computationally expensive, 
we approximate it using $r$ sample points based on the Gauss--Legendre nodes \cite{press2007numerical}:
\begin{equation}
\label{eq: node1}
\begin{split}
\hat M_{k} := \max_{i=1,\dots,r} \Phi(\boldsymbol{x}_i,\boldsymbol{u}_i,t_i),~
t_i=t_k+\gamma_i \Delta t,\\ \boldsymbol{x}_i\approx\boldsymbol{x}(t_i), ~\boldsymbol{u}_i\approx\boldsymbol{u}(t_i),
\end{split}
\end{equation}
and
\begin{equation}
\label{eq: bound3}
\begin{split}
\bar M_{k}(\Delta t)\ \le\ \hat M_{k}\ +\ L_{\Phi}(\Delta_{\boldsymbol{x}}+\Delta_{\boldsymbol{u}}),
\end{split}
\end{equation}
where $\{\gamma_i\}$ are the $r$-point Gauss--Legendre quadrature nodes ($r=3$--$5$),
$\Delta_{\boldsymbol{x}}$ and $\Delta_{\boldsymbol{u}}$ denote the maximum variations of the state and control
between adjacent nodes (approximately $\frac{\rho(\Delta t)}{2}$ and $\frac{|\boldsymbol{u}_{i+1}-\boldsymbol{u}_i|}{2}$, respectively),
and $L_{\Phi}$ is the local Lipschitz constant of $\Phi$
with respect to both $\boldsymbol{x}$ and $\boldsymbol{u}$. Gauss--Legendre nodes provide good coverage of smooth functions on a finite interval,
so $\hat M_{k}$ offers a tight and tractable estimate of the local supremum.
\subsection{SACBFs Design for Safety and Finite-Time Reach-and-Remain}
\label{subsec:SACBF design}
To formalize the safety and finite-time reach-and-remain requirements, we define two candidate SACBF formulations based on selected state components.

For the \textbf{safety requirement}, we specify the states $\boldsymbol{x}_{i,s}\in\mathbb{R}^{n_i}$ that must remain within a fixed safe region centered at $\boldsymbol{x}_{i,s_0}$ with a constant boundary radius $r_{i,s_{0}}>0$. The corresponding constraint is 
\begin{equation}
\label{eq: safe SACBF}
\psi_{i,0}^{s}(\boldsymbol{x}_{i,s}) = b_i(\boldsymbol{x}_{i,s}) =\|\boldsymbol{x}_{i,s}-\boldsymbol{x}_{i,s_0}\|_{\bar{p}_i} - r_{i,s_{0}}^{\bar{p}_i} \ge 0, 
\end{equation} 
where $\boldsymbol{x}_{i,s_0}\in\mathbb{R}^{n_i}$, $n_i\le n$, and $\|\cdot\|_{\bar{p}_i}$ denotes the $\bar{p}_i$-norm. Here, the index $i$ denotes the $i$-th safety requirement.

For the \textbf{finite-time reach-and-remain requirement}, we define the states $\boldsymbol{x}_{j,c}\in\mathbb{R}^{n_j}$ that must reach a target region centered at $\boldsymbol{x}_{j,c_d}$ with a constant boundary radius $\varepsilon_{j,c_d}>0$ within the finite time horizon $[T_{j,c_0},T_{j,c_1}]$ by satisfying 
\begin{equation} 
\label{eq: reachability SACBF1}
\psi_{j,0}^{c}(\boldsymbol{x}_{j,c},t) = h_j(\boldsymbol{x}_{j,c},t)=\varepsilon_j(t)^{\bar{p}_j} - \|\boldsymbol{x}_{j,c}-\boldsymbol{x}_{j,c_d}\|_{\bar{p}_j} \ge 0, 
\end{equation}
where $\boldsymbol{x}_{j,c_d}\in\mathbb{R}^{n_j}$, $n_j\le n$, and $\|\cdot\|_{\bar{p}_j}$ denotes the $\bar{p}_j$-norm. The function $\varepsilon_j(t)$ specifies a contracting convergence radius: 
\begin{equation} 
\label{eq: reachability SACBF2}
\varepsilon_j(t) = \varepsilon_{j,c_0} - K_{j,c}(t-T_{j,c_0}),\quad K_{j,c}=\frac{\varepsilon_{j,c_0}-\varepsilon_{j,c_d}}{T_{j,c_1}-T_{j,c_0}}, \end{equation} 
with $\varepsilon_{j,c_0}\ge\varepsilon_{j,c_d}\ge0$. Here, the index $j$ denotes the $j$-th reachability requirement. $K_{j,c}$ denotes the contraction rate of the convergence boundary,
with $\varepsilon_{j,c_0}$
specifying the initial radii. The state $\boldsymbol{x}_{j,c}$ is always contained within the region defined by $\varepsilon_j(t)$.
As $\varepsilon_j(t)$ gradually decreases, the state progressively converges until it reaches the target region within the specified time horizon. For the ``remain'' requirement that the state remains within the radius $\varepsilon_{j,c_d}$ over the interval $[T_{j,c_1},T_{j,c_2}]$, we set $\varepsilon_{j,c_0}=\varepsilon_{j,c_d}$ and $K_{j,c}=0$ for the time horizon $[T_{j,c_1},T_{j,c_2}]$. In this case, the radius remains constant, and the SACBF enforces that the state remains within this fixed region throughout the interval.
\subsection{Computational Complexity}
The computational burden of solving Problem~1 is dominated by the SACBF-QP solved at each sampling instant. 
With $q$ decision variables and $N$ constraints, the QP has a worst-case complexity of 
$\mathcal{O}\big((q+N)^3\big)$. 
The additional SACBF overhead arises from evaluating the Taylor-based upper bound $\bar{M}_k(\Delta t)$ in~\eqref{eq: bound2}, 
which depends on the expression of $\Phi$ in~\eqref{eq: bound1}. 
Since $\bar{M}_k$ is approximated using only $r$ Gauss--Legendre nodes, this step scales linearly as 
$\mathcal{O}(r\,C_\Phi)$, where $C_\Phi$ is the cost of evaluating $\Phi$. 
Therefore, the overall per-step complexity is the sum of the QP complexity and the bound approximation, but it is empirically dominated by QP.

\section{Case Studies}
\label{sec:Case Study and Simulations}
\begin{figure*}[!t]
    \vspace*{0.2cm}
    \centering
    \begin{subfigure}[t]{0.32\linewidth}
        \centering
        \includegraphics[width=1.0\linewidth]{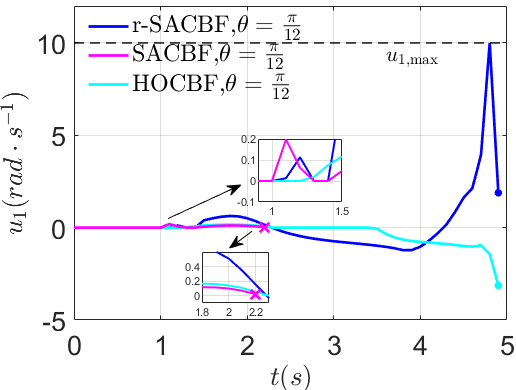}
        \caption{Input $u_{1}$ varies over time.}
        \label{subfig:1}
    \end{subfigure}
    \begin{subfigure}[t]{0.32\linewidth}
        \centering
        \includegraphics[width=1.0\linewidth]{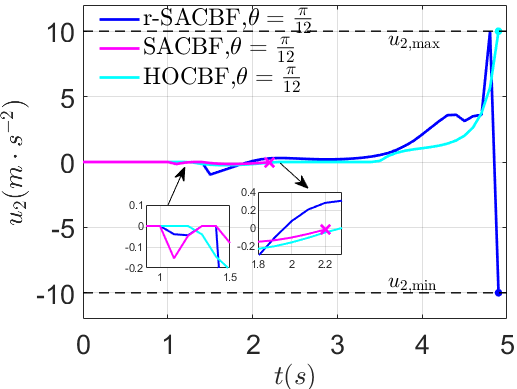}
        \caption{Input $u_{2}$ varies over time.}
        \label{subfig:2}
    \end{subfigure}  
    \begin{subfigure}[t]{0.32\linewidth}
        \centering
        \includegraphics[width=1.0\linewidth]{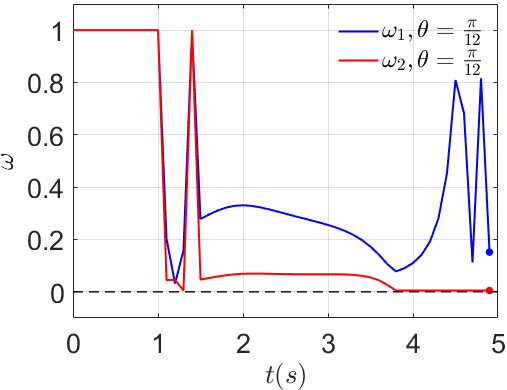}
        \caption{Slack variables vary over time for r-SACBF (blue for safety, red for reachability).}
        \label{subfig:3}
    \end{subfigure}
    \caption{Time evolution of decision variables under r-SACBF, SACBF, and HOCBF methods when initial heading angle is $\frac{\pi}{12}$. The cross symbol ``\(\times\)'' indicates that the QP is infeasible at that time step.}
    \label{fig: decision variable}
\end{figure*}

In this section, we consider a unicycle with dynamics described by:
\begin{small}
\begin{equation}
\label{eq:UM-dynamics1}
\underbrace{\begin{bmatrix}
\dot{x}(t) \\
\dot{y}(t) \\
\dot{\theta}(t)\\
\dot{v}(t)
\end{bmatrix}}_{\dot{\boldsymbol{x}}(t)}  
=\underbrace{\begin{bmatrix}
 v(t)\cos{(\theta(t))}  \\
 v(t)\sin{(\theta(t))} \\
 0 \\
 0
\end{bmatrix}}_{f(\boldsymbol{x}(t))} 
+ \underbrace{\begin{bmatrix}
  0 & 0\\
  0 & 0\\
  1 & 0\\
  0 & 1 
\end{bmatrix}}_{g(\boldsymbol{x}(t))}\underbrace{\begin{bmatrix}
   u_{1}(t)   \\
  u_{2}(t) 
\end{bmatrix}}_{\boldsymbol{u}(t)},
\end{equation}
\end{small}
which is of the form \eqref{eq: affine-system}. In \eqref{eq:UM-dynamics1}, $[x, y]^{\top}$ denote the coordinates of the unicycle, $v$ is its linear speed, $\theta$ denotes the heading angle, and $\boldsymbol{u}$ represent the angular velocity ($u_{1}$) and linear acceleration ($u_{2}$), respectively. The input bounds are defined as $\mathcal{U}=\{\mathbf{u}_{t}\in \mathbb{R}^{2}: -10\cdot \mathcal{I}_{2\times1} \le \mathbf{u}\le 10\cdot \mathcal{I}_{2\times1}\}.$

The QP at each time step is solved using MATLAB’s \texttt{quadprog}, and the system dynamics are integrated with the \texttt{ode45} solver. All simulations are carried out on a computer with an Intel\textsuperscript{\textregistered} Core\texttrademark{} i7-11750F CPU running at 2.50GHz.

\subsection{Safety Requirement}
For the safety requirement, we consider $N_s$ circular obstacles that the robot must avoid, indexed by $i \in\{1,...,N_s\}$.
The safety boundary is constructed from a quadratic distance function defined as
\begin{equation}
\label{eq: task1}
    b_{i}(\boldsymbol{x}_{i,s}) = (x - x_{i,s_{0}})^{2} + (y - y_{i,s_{0}})^{2} - r_{i,s_{0}}^{2},
\end{equation}
where $(x_{i,s_{0}}, y_{i,s_{0}})$ and $r_{i,s_{0}}$ denote the center coordinates and radius of the $i$-th obstacle, respectively. 
Based on \eqref{eq: safe SACBF}, we define SACBF as 
\begin{equation}
\label{eq: SACBF1}
\psi_{i,0}^{s}(\boldsymbol{x}_{i,s}) = b_i(\boldsymbol{x}_{i,s}).
\end{equation}

\subsection{Finite-Time Reach-and-Remain Requirement}
The finite-time reach-and-remain requirement specifies that the system must reach a set of circular target regions, indexed by $j\in\{1,...,N_c\}$, and then remain within them.
The boundary of the $j$-th desired region is constructed from a quadratic distance function defined as
\begin{equation}
\label{eq: task3}
\begin{split}
&h_{j}(\boldsymbol{x}_{j,c})= \varepsilon_{j,c_d}^{2}-(x-x_{j,c_d})^{2}-(y-y_{j,c_d})^{2},\\
\end{split}
\end{equation}
where $(x_{j,c_d}, y_{j,c_d})$ and $\varepsilon_{j,c_d}$ denote the center coordinates and radius of the $j$-th target area, respectively. Based on \eqref{eq: reachability SACBF1}, \eqref{eq: reachability SACBF2}, we define SACBF as 
\begin{equation}
\label{eq: SACBF2}
\begin{split}
\psi_{j,0}^{c}(\boldsymbol{x}_{j,s},t) = \varepsilon_{j,c_0}^{2} - K_{j,c}(t-T_{j,c_0}) -(x-x_{j,c_d})^{2}-\\(y-y_{j,c_d})^{2}, 
K_{j,c}=\frac{\varepsilon_{j,c_0}^{2}-\varepsilon_{j,c_d}^{2}}{T_{j,c_1}-T_{j,c_0}}.
\end{split}
\end{equation}
Here, $\varepsilon_{j,c_0}$ is the initial radius of the circular region containing the robot at time $T_{j,c_0}$, and $K_{j,c}$ specifies the linear shrinking rate that decreases the radius from $\varepsilon_{j,c_0}$ to $\varepsilon_{j,c_d}$ over $[T_{j,c_0},T_{j,c_1}]$. This yields a time-varying boundary that contracts toward the $j$-th target region, ensuring finite-time reachability. If the robot is required to remain within this region over $[T_{j,c_1},T_{j,c_2}]$, we set $\varepsilon_{j,c_0}=\varepsilon_{j,c_d}$ and $K_{j,c}=0$ in~\eqref{eq: SACBF2} for that interval.
\subsection{Complete Cost Function for SACBF-QP}
By formulating the constraints from the SACBFs in \eqref{eq: SACBF}, which are constructed based on \eqref{eq: SACBF1} and \eqref{eq: SACBF2} (with relative degree 2), together with the input bounds \eqref{eq: input bounds}, we can define the cost function for QP as
\begin{equation}
\label{eq:SACBF-QP}
\begin{split}
\min_{\boldsymbol{u}(t),\omega_{k}(t)} \int_{0}^{T}[\boldsymbol{u}(t)^{\top}\boldsymbol{u}(t)+\sum_{k=1}^{N_s+N_c} q_{k}(\omega_{k}(t)-1)^{2}]dt,
\end{split}
\end{equation}
where $\omega_{k}$ is a relaxation variable introduced in the SACBF constraint \eqref{eq: positive guarantee2},  $q_{k}$ is a positive weight scalar, and the slack variable $\omega_{k}$ is designed to converge to $1$. Since existing approaches that address the inter-sampling issue cannot handle high-order constraints, the HOCBF framework \eqref{eq: highest HOCBF} is selected as the benchmark for comparison.
The proposed method without the slack variable is denoted as SACBF, while the one incorporating the relaxation variable is referred to as the relaxed SACBF (r-SACBF).
All methods are implemented with identical hyperparameters for a fair comparison.
\subsection{Case Study \Romannum{1}}
\begin{figure}[ht]
\vspace*{3mm}
    \centering
    \includegraphics[scale=0.18]{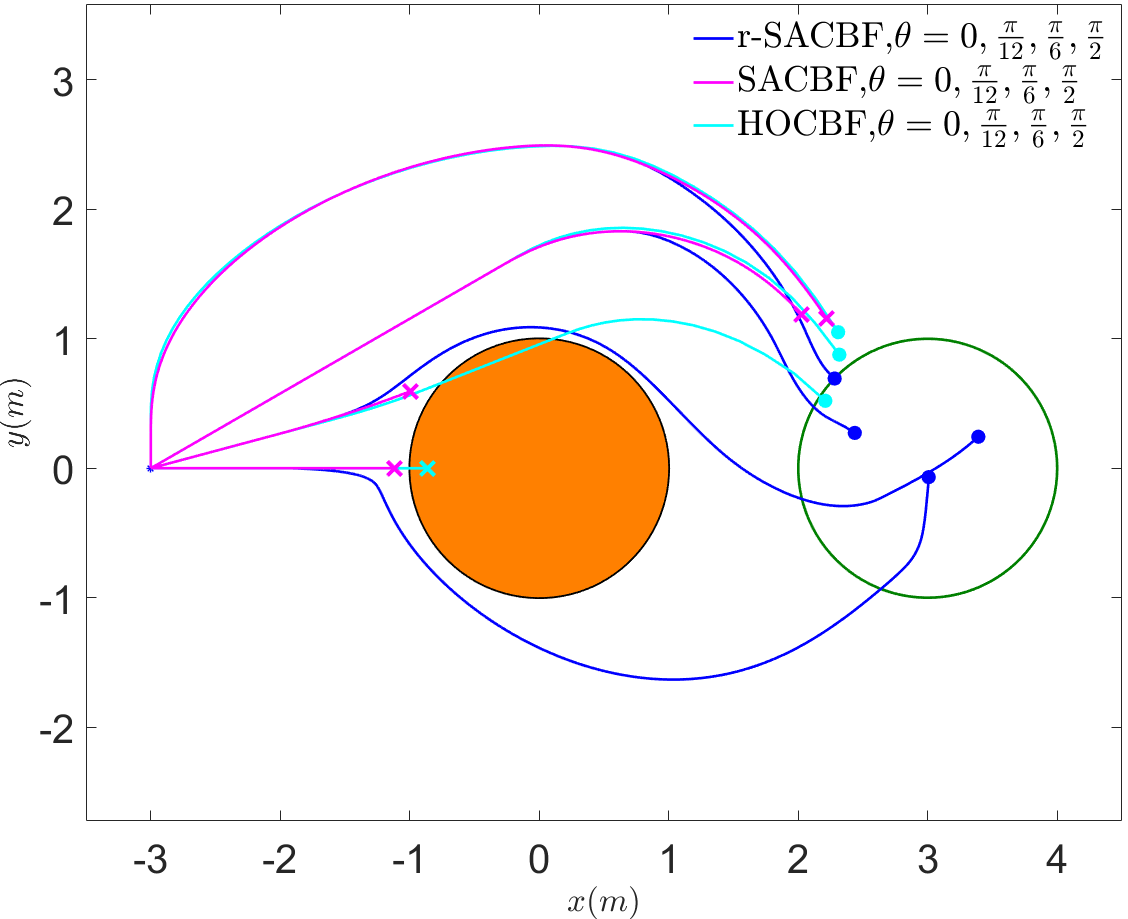}
    \caption{Closed-loop trajectories with controllers derived using r-SACBF (blue), SACBF (magenta) and HOCBF (cyan) with different initial heading angle. The cross symbol ``\(\times\)'' indicates that the QP is infeasible at that location.}
    \label{fig:SACBF-trajectory}
\end{figure} 
Consider a case where the robot must avoid a circular obstacle centered at $(0,0)$ with a radius of $1$, while reaching a circular target region centered at $(3,0)$ with a radius of $1$ within $5$ seconds. The robot starts from the initial position $(-3,0)$ with an initial speed of $1$. The hyperparameters are set as $\lambda_{k}=2$, $\eta_{k}=1$, $q_{k}=200$, $r=5$, $\varepsilon_{1,c_0}=7$, $T_{1,c_1}=5$, $T_{1,c_0}=0$, and $\Delta t=0.1$.

As shown in Fig.~\ref{fig:SACBF-trajectory}, a smaller initial heading angle makes it more difficult for the robot to satisfy the safety requirement. The constraint in \eqref{eq: SACBF1} enforces safety by steering the robot away from obstacles, while the constraint in \eqref{eq: SACBF2} enforces finite-time reachability by driving it toward the target region. Since the robot’s initial position, the obstacle center, and the target center are collinear, a smaller heading angle (i.e., the initial velocity points more directly toward the obstacle) intensifies the conflict between safety and reachability.
This increased conflict may lead to unsafe behaviors or even infeasibility of the optimization problem. Due to the inter-sampling issue, the HOCBF method becomes unsafe when the initial heading angles are $0$ or $\frac{\pi}{12}$, as the robot crosses the obstacle boundary. In contrast, the SACBF and r-SACBF methods maintain safety under the same conditions. The SACBF method incorporates a Taylor-based upper bound to address the inter-sampling issue, which reduces the feasible space of the control input and makes infeasibility more likely compared to HOCBF. Consequently, when the initial heading angles are $0$ or $\frac{\pi}{12}$, the SACBF method becomes infeasible due to overly restrictive safety constraints, whereas for initial heading angles of $\frac{\pi}{6}$ or $\frac{\pi}{2}$, infeasibility arises from excessively tight reachability constraints. Meanwhile, the r-SACBF method achieves safe and successful convergence to the desired region for all tested initial heading angles.

Fig.~\ref{fig: decision variable} illustrates the time evolution of the decision variables when the initial heading angle is $\frac{\pi}{12}$.
From Figs.~\ref{subfig:1} and \ref{subfig:2}, it can be observed that the control inputs of the r-SACBF and SACBF methods respond earlier than those of the HOCBF method to satisfy the safety and reachability constraints.
Moreover, due to the introduction of relaxation variables, the r-SACBF method avoids infeasibility.
As shown in Fig.~\ref{subfig:3}, the relaxation variables decrease toward zero when the corresponding constraints become tight, thereby enhancing feasibility, which is consistent with the description provided in Sec.~\ref{subsec:relaxed SACBF}.

\subsection{Case Study \Romannum{2}}
In this case, we consider a more complex task to evaluate the performance of the r-SACBF method under multiple constraints.
The robot is required to avoid three circular obstacles with radii of $1$, $2$, and $1.5$, centered at $(0,0)$, $(8,0)$, and $(3,3)$, respectively.
The mission specifies that the robot must reach desired region 1 (centered at $(3,0)$ with a radius of $1$) within $5$ seconds, remain inside this region for the following $7$ seconds, then reach region 2 (centered at $(0,7)$ with a radius of $0.5$) within the next $6$ seconds, and finally reach region 3 (centered at $(4,5)$ with a radius of $0.2$) within the subsequent $4$ seconds.
The robot starts from the initial position $(-3,0)$ with an initial speed of $1$ and an initial heading angle of $\frac{\pi}{6}$. The hyperparameters are set as $\lambda_{k}=2$, $\eta_{k}=1$, $q_{k}=1$, $r=5$, $\varepsilon_{1,c_0}=7$, $T_{1,c_2}=12$, $T_{1,c_1}=5$, $T_{1,c_0}=0$, $\varepsilon_{2,c_0}=9$, $T_{2,c_1}=18$, $T_{2,c_0}=12$, $\varepsilon_{3,c_0}=6$, $T_{3,c_1}=22$, $T_{3,c_0}=18$ and $\Delta t=0.1$.
\begin{figure}[ht]
\vspace*{3mm}
    \centering
\includegraphics[scale=0.36]{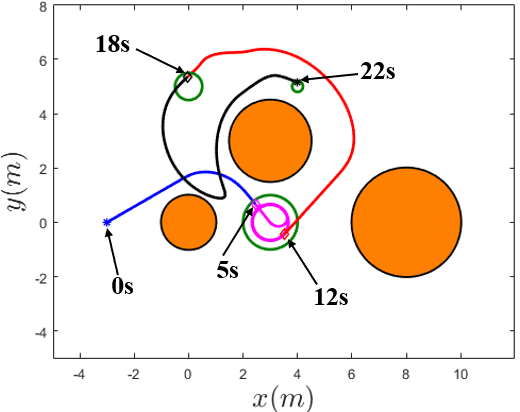}
    \caption{Closed-loop trajectory generated by the r-SACBF method for the multi-constraint task. 
The robot avoids three obstacles and reaches the target regions at 5s, 18s, and 22s as indicated.}
    \label{fig:STLtasks2}
\end{figure} 

Fig.~\ref{fig:STLtasks2} shows the robot trajectory from $0$s to $22$s under the r-SACBF framework.
As illustrated by the color-coded trajectories corresponding to different time intervals, the robot remains feasible throughout the task despite the presence of multiple constraints (three safety constraints, one reach-and-remain constraint, and one input bound).
In particular, as the target region gradually shrinks, the robot is still able to reach it within the required time, demonstrating the effectiveness and flexibility of the r-SACBF in handling complex, multi-constraint tasks. Fig.~\ref{fig:STLSACBFs} illustrates the time evolution of the SACBF functions.
Since the SACBFs have a relative degree of $2$ with respect to system~\eqref{eq:UM-dynamics1}, both the zero-order SACBFs and the first-order SACBFs are shown. It can be observed that, under the enforcement of \eqref{eq: SACBF}, all functions remain positive within the 22s horizon, indicating that both safety and reach-and-remain requirements are satisfied.
\begin{figure}[ht]
\vspace*{3mm}
    \centering
    \includegraphics[scale=0.115]{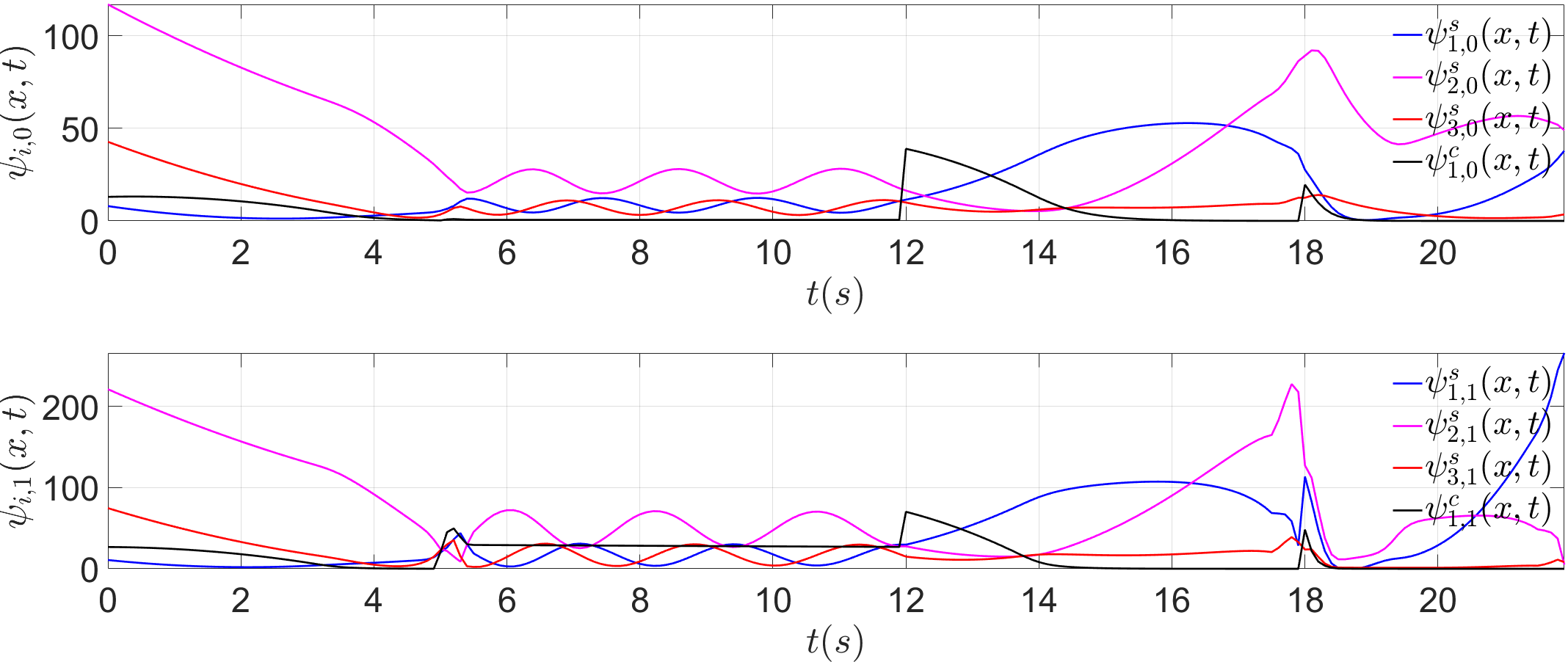}
    \caption{Time evolution of the SACBF functions.
The top plot shows the zero-order SACBFs, and the bottom plot shows the first-order SACBFs.
The four curves correspond to three safety constraints (blue, magenta, red) and one reach-and-remain constraint (black).
All functions stay positive over the 22s horizon.}
    \label{fig:STLSACBFs}
\end{figure} 

\section{Conclusion and Future Work}
This paper introduced Sampling-Aware Control Barrier Functions (SACBFs) to ensure safety and finite-time reach-and-remain for nonlinear systems with high relative-degree constraints under sampled implementations. By incorporating Taylor-based upper bounds, the SACBF framework explicitly captures inter-sampling effects and guarantees continuous-time forward invariance of the desired sets under zero-order-hold control. The introduction of relaxation variables further improves feasibility when multiple constraints are jointly enforced. Simulation studies on a unicycle robot validated the effectiveness and flexibility of the proposed approach, demonstrating safe and feasible performance in complex multi-constraint scenarios where conventional HOCBF-based methods fail. Future work will focus on extending SACBFs to systems with stochastic disturbances and model uncertainties, and integrating learning-based techniques for adaptive estimation of the Taylor-based upper bounds to further reduce conservatism and improve real-time scalability.
\label{sec:conclusion}
\bibliographystyle{IEEEtran}
\balance
\bibliography{references.bib}
\end{document}